\newtheorem{theorem}{Theorem}
\newtheorem{proposition}{Proposition}
\theoremstyle{remark}
\newcommand{\R}{\mathbb{R}}
\newcommand{\norm}[1]{\left\Vert#1\right\Vert}  
\newcommand{\cara}{Carath{\'e}odory}
\newcommand{\convex}{conv}
\begin{document}

\title{Newman's theorem via Carath{\'e}odory}

\author{
Yaqiao Li\footnote{Shenzhen University of Advanced Technology, Shenzhen, China, liyaqiao@suat-sz.edu.cn},
Ali Mohammad Lavasani\footnote{Concordia University, ali.mohammadlavasani@concordia.ca},
Mehran Shakerinava\footnote{McGill University, mehran.shakerinava@mail.mcgill.ca}}

\maketitle

\begin{abstract}
    We give a short geometric
    proof of Newman's theorem in communication complexity by applying the classical and approximate Carath{\'e}odory theorems from convex geometry.
\end{abstract}

Newman's theorem is a fundamental result in communication complexity. It says that using public randomness can only gain a logarithmic advantage in communication cost comparing to private randomness. The standard proof \cite{KN,roughgarden2016communication,rao2020communication} uses the probabilistic method and the Chernoff bound. Here, we apply Carath{\'e}odory's theorems directly. Our proof is not too different from the standard proof, as the approximate Carath{\'e}odory is proved via a similar method. Though, the alternative proof seems neater. Besides, we note that different versions of approximate Carath{\'e}odory theorems and their various applications in computer science have been discussed in \cite{Cara_approx_1,Cara_approx_2}. Hence, presenting such a new proof may induce new applications.

Let $\convex(S)$ denote the convex hull of a set $S \subseteq \R^n$. Let $\norm{\cdot}_\infty$ denote the usual $L_\infty$ norm. Let $R^{pri}_{\epsilon}$ and $R^{pub}_{\epsilon}$ denote the communication cost using private and public randomness, respectively.

\begin{proposition}[\cara,~\cite{Cara,Cara2}]
    \label{prop:Cara}
     Let $S \subseteq \R^n$. Every $x \in \convex(S)$ can be written as a convex combination of at most $n+1$ points from $S$. 
\end{proposition}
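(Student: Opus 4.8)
The plan is a descent argument: start from any representation of $x$ as a convex combination of points of $S$ and show that as long as it uses more than $n+1$ points, we can eliminate one of them without leaving $\convex(S)$. So suppose $x = \sum_{i=1}^k \lambda_i s_i$ with $s_i \in S$, all $\lambda_i > 0$, $\sum_{i=1}^k \lambda_i = 1$, and $k \ge n+2$; the goal is a representation of $x$ of the same type with strictly fewer terms.

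The first step is to extract an affine dependence among the $s_i$. Since $k - 1 \ge n+1$, the vectors $s_2 - s_1, \dots, s_k - s_1$ are linearly dependent in $\R^n$, and, after collecting the $s_1$ terms, this produces reals $\mu_1, \dots, \mu_k$, not all zero, with $\sum_{i=1}^k \mu_i s_i = 0$ and $\sum_{i=1}^k \mu_i = 0$. Because the $\mu_i$ sum to $0$ and are not all $0$, at least one of them is strictly positive.

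The second step is to move along this dependence. For $t \ge 0$ put $\lambda_i(t) := \lambda_i - t\mu_i$; then $\sum_i \lambda_i(t)\, s_i = x$ and $\sum_i \lambda_i(t) = 1$ hold for every $t$, so the only constraint to maintain is nonnegativity of the coefficients. Let $t^\star := \min\{\lambda_i/\mu_i : \mu_i > 0\}$, which is well defined and strictly positive. For this choice every $\lambda_i(t^\star) \ge 0$, while the index attaining the minimum satisfies $\lambda_i(t^\star) = 0$, so $x$ is exhibited as a convex combination of at most $k-1$ of the points $s_i \in S$. Iterating this reduction terminates with a representation using at most $n+1$ points. (The case $S = \emptyset$ is vacuous, as then $\convex(S) = \emptyset$; and there is nothing to prove once $|S| \le n+1$.)

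I expect the main — essentially the only — point requiring care to be the verification in the second step: that $t^\star$ is well defined because some $\mu_i > 0$, that it is strictly positive because every $\lambda_i > 0$, and that this single choice of $t$ simultaneously keeps all coefficients nonnegative and forces at least one of them to vanish. Everything else is elementary linear algebra, and termination of the descent is immediate since $k$ strictly decreases at each step.
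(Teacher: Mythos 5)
Your argument is correct: it is the standard descent proof of Carath\'eodory's theorem (extract an affine dependence among $k\ge n+2$ points, shift the coefficients along it until one vanishes, iterate), and the verification of $t^\star$ you flag is handled properly. The paper itself gives no proof of this proposition --- it is quoted from the literature --- so your write-up simply supplies the classical argument; the only implicit step worth a word is that every $x\in\convex(S)$ is by definition (or by a standard fact) a \emph{finite} convex combination of points of $S$, which your descent then shortens.
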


\begin{proposition}[Approximate \cara,~\cite{Cara_approx_1}]
    \label{prop:Cara_approx}
    Let $\delta > 0$. Let $S\subseteq \R^n$ satisfy
        $\max_{y \in S} \norm{y}_\infty \le 1$.
    For every $x \in \convex(S)$, there exists $x' \in \convex(S)$, such that
        $\norm{x - x'}_\infty \le \delta$,
    and $x'$ is a convex combination of at most $O(\frac{\log |S|}{\delta})$ many points from $S$.
\end{proposition}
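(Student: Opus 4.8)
The plan is to argue by the probabilistic method, via the ``empirical average'' trick of Maurey that also underlies the standard proof of Newman's theorem. Fix $x\in\convex(S)$. By the definition of the convex hull, $x$ is a \emph{finite} convex combination $x=\sum_{i=1}^{m}\lambda_i y_i$ with $y_i\in S$ and $(\lambda_1,\dots,\lambda_m)$ a probability vector; when convenient — as in the application to Newman's theorem, where the number $|S|$ of public random strings may be huge — one first invokes Proposition~\ref{prop:Cara} to take $m\le n+1$, and in fact to assume $|S|\le n+1$. Regard $\lambda$ as a distribution on $\{y_1,\dots,y_m\}$, draw $Z_1,\dots,Z_k$ independently from it, and set $x'=\tfrac1k\sum_{j=1}^k Z_j$. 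Then $x'$ is a convex combination of at most $k$ of the points $y_i\in S$, each appearing with weight a multiple of $1/k$, so $x'\in\convex(S)$, and $\mathbb{E}[x']=x$. It remains to choose $k$ of the claimed order so that $\norm{x-x'}_\infty\le\delta$ has positive probability.

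First I would handle one coordinate $t$ at a time. There $x'_t-x_t=\tfrac1k\sum_{j=1}^k\bigl((Z_j)_t-\mathbb{E}\,(Z_j)_t\bigr)$ is an average of $k$ independent, mean-zero random variables, each bounded in absolute value by $\max_{y\in S}\norm{y}_\infty\le 1$, so a Chernoff/Hoeffding tail bound gives $\Pr[\,|x'_t-x_t|>\delta\,]\le 2\exp(-\Omega(k\delta^2))$. Moreover, in the situation arising from communication complexity, where each coordinate of $x$ is already within $O(\delta)$ of $0$ or of $1$, the multiplicative form of the Chernoff bound improves the exponent to $-\Omega(k\delta)$. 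A union bound over the $n$ coordinates then yields
\[
  \Pr\bigl[\,\norm{x'-x}_\infty>\delta\,\bigr]\;\le\;2n\exp\bigl(-\Omega(k\delta)\bigr),
\]
so any $k$ making the right-hand side strictly less than $1$ works; after the normalization $|S|\le n+1$ from Proposition~\ref{prop:Cara} one records such a $k$ in the form $O\!\bigl(\tfrac{\log|S|}{\delta}\bigr)$ of the statement, and the resulting $x'$ is a convex combination of at most $k$ points of $S$ with $\norm{x-x'}_\infty\le\delta$.

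The one step that carries an actual idea is the concentration estimate — it is precisely what drives the count down from the $n+1$ of Proposition~\ref{prop:Cara} to something logarithmic in the data — and I expect the only real care to be needed there, in choosing the form of the tail bound. Everything else is bookkeeping: pinning down constants, checking that a union bound over the $n$ coordinate events suffices to control $\norm{\cdot}_\infty$, and translating the union-bound count $O(\tfrac{\log n}{\delta})$ into the stated $O(\tfrac{\log|S|}{\delta})$ via Proposition~\ref{prop:Cara}. If one prefers to avoid the multiplicative Chernoff bound entirely, the plain Hoeffding bound still gives the count $O\!\bigl(\tfrac{\log|S|}{\delta^2}\bigr)$, which is just as good for deriving Newman's theorem since only the logarithm of this count enters the communication cost.
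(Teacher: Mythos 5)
The paper itself does not prove this proposition (it is quoted from \cite{Cara_approx_1}), and your Maurey-type sampling argument is indeed the standard route to results of this kind; the Hoeffding-plus-union-bound core is correct and yields a convex combination of $k=O(\log n/\delta^2)$ points, where $n$ is the \emph{ambient dimension}, because the union bound ranges over the $n$ coordinates. The problem lies precisely in the two steps you dismiss as bookkeeping. First, Proposition~\ref{prop:Cara} replaces $S$ by $S'$ with $|S'|\le n+1$, i.e.\ $\log|S'|\le\log(n+1)$; this inequality points the wrong way, so it cannot convert your dimension-dependent count $O(\log n/\delta^2)$ into the cardinality-dependent count $O(\log|S|/\delta^2)$ of the statement (picture $|S|=3$ points in $\R^{10^6}$), let alone into $O(\log|S|/\delta)$. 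Second, the multiplicative Chernoff improvement of the exponent from $-\Omega(k\delta^2)$ to $-\Omega(k\delta)$ invokes a hypothesis --- every coordinate of $x$ lies within $O(\delta)$ of $0$ or $1$ --- that is not part of the proposition, and is not even available in the Newman application unless $\epsilon=O(\delta)$, since there the coordinates of $P$ are only within $\epsilon$ of $0$ or $1$.

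These gaps cannot be patched, because the bound in the literal form $O(\log|S|/\delta)$ is false. Let $q$ be a power of $2$ with $q\approx\delta^{-2}$, and let $S=\{h_1,\dots,h_q\}\subseteq\R^{q}$ be the rows of a $q\times q$ Hadamard matrix whose first row and column are all ones; then $\max_{y\in S}\norm{y}_\infty=1$ and the centroid $x=\frac1q\sum_i h_i$ is the first standard basis vector, so $x\in\convex(S)$. If $x'=\sum_i\mu_i h_i\in\convex(S)$ uses only $k$ rows and $\norm{x-x'}_\infty\le\delta$, then $x'_1=1$ and $|x'_t|\le\delta$ for $t\ge2$, so by orthogonality $q\sum_i\mu_i^2=\norm{x'}_2^2\le 1+(q-1)\delta^2\le 2$, while Cauchy--Schwarz on the support gives $\sum_i\mu_i^2\ge 1/k$; hence $k\ge q/2=\Omega(\delta^{-2})$, which exceeds $O(\log|S|/\delta)=O(\delta^{-1}\log\delta^{-1})$ once $\delta$ is small. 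What your argument does prove, $k=O(\log n/\delta^2)$ with $n$ the dimension (equivalently, with the logarithm of the number of coordinates over which the sup-norm is taken), is the correct folklore statement, and it is all the paper's application needs: there the dimension is $N=2^{2n}$, so $k=O(n/\delta^2)$ and only $\log k$ enters the communication cost, again giving $O(\log n+\log\delta^{-1})$. I would therefore state and prove that version (or Barman's $\ell_p$ version from \cite{Cara_approx_1}) rather than try to reach the form printed in the proposition.
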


\begin{theorem}[Newman,~\cite{Newman}]
    Let $f: \{0,1\}^n \times \{0,1\}^n  \to \{0,1\}$ be a Boolean function. Let $\epsilon, \delta > 0$. Then,
        $R^{pri}_{\epsilon + \delta}(f) \le R^{pub}_{\epsilon}(f) + O(\log n + \log \delta^{-1})$.
\end{theorem}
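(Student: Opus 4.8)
The plan is to recast the statement in terms of error vectors in $\R^N$ with $N = 2^{2n}$, and then apply Propositions~\ref{prop:Cara} and~\ref{prop:Cara_approx} one after the other. Fix a public-coin protocol $\Pi$ of communication cost $c := R^{pub}_\epsilon(f)$ and error at most $\epsilon$ on every input. Conditioning on the public random string $r$ turns $\Pi$ into a deterministic protocol $\Pi_r$ of cost at most $c$, and for every input $(x,y)$ we have $\Pr_r[\Pi_r(x,y) \ne f(x,y)] \le \epsilon$. To any deterministic protocol $D$ attach its \emph{error vector} $e_D \in \{0,1\}^N$, indexed by inputs, with $(e_D)_{(x,y)} = 1$ precisely when $D(x,y) \ne f(x,y)$. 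Put $S := \{ e_{\Pi_r} : r \}$; then $\norm{v}_\infty \le 1$ for every $v \in S$, the average error vector $\bar e := \mathbb{E}_r[ e_{\Pi_r} ]$ lies in $\convex(S)$, and $\norm{\bar e}_\infty \le \epsilon$ by the error guarantee.

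First I would use the classical \cara\ theorem (Proposition~\ref{prop:Cara}) to shrink the number of relevant protocols: write $\bar e = \sum_{i=1}^{N+1} \alpha_i \, e_{D_i}$ as a convex combination of error vectors of at most $N+1$ deterministic protocols $D_1, \dots, D_{N+1}$, each of cost at most $c$, and set $S' := \{ e_{D_1}, \dots, e_{D_{N+1}} \}$, so that $|S'| \le N+1 = 2^{2n}+1$. Now apply the approximate \cara\ theorem (Proposition~\ref{prop:Cara_approx}) to $\bar e \in \convex(S')$ with the given $\delta$: one gets an integer $m = O(\log|S'|/\delta) = O(n/\delta)$, protocols $E_1, \dots, E_m$ chosen among the $D_i$, and weights $\lambda_1, \dots, \lambda_m \ge 0$ with $\sum_j \lambda_j = 1$, such that $\bar e' := \sum_{j=1}^m \lambda_j \, e_{E_j}$ satisfies $\norm{\bar e - \bar e'}_\infty \le \delta$.

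Finally I would read off the private-coin protocol $\Pi'$: Alice uses her private coins to sample $j \in \{1,\dots,m\}$ with $\Pr[j] = \lambda_j$, sends $j$ to Bob (costing $\lceil \log_2 m \rceil = O(\log n + \log \delta^{-1})$ bits), and then the two players execute $E_j$ (costing at most $c$). For every input $(x,y)$, the error probability of $\Pi'$ equals $\sum_{j=1}^m \lambda_j (e_{E_j})_{(x,y)} = (\bar e')_{(x,y)} \le (\bar e)_{(x,y)} + \delta \le \epsilon + \delta$, while its worst-case cost is $c + O(\log n + \log \delta^{-1})$. This is exactly the claimed bound.

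The argument is mostly a matter of setting up the right dictionary --- deterministic protocols $\leftrightarrow$ error vectors, public coins $\leftrightarrow$ convex combination, low communication $\leftrightarrow$ sparsity --- so I do not expect a serious technical obstacle; the one point that genuinely needs care is the order of the two \cara\ applications. Feeding the original $S$ into Proposition~\ref{prop:Cara_approx} is worthless, since $S$ may contain as many as $2^N$ distinct error vectors and then $\log |S|$ is exponential in $n$; it is the preliminary classical step that cuts $|S'|$ down to $2^{O(n)}$, making $\log|S'| = O(n)$, so that the sparsification produces only $\mathrm{poly}(n,\delta^{-1})$ protocols and naming one of them costs just $O(\log n + \log\delta^{-1})$ bits. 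Note that it is the domain size $2^{2n}$, rather than the protocol's cost, that drives the $O(\log n)$ term, so no a priori bound on $c$ is needed.
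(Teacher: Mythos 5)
Your proposal is correct and follows essentially the same route as the paper: classical \cara\ to cut the set of deterministic protocols down to at most $N+1 = 2^{2n}+1$, then approximate \cara\ to sparsify to $O(n/\delta)$ protocols, one of which Alice samples and names with $O(\log n + \log \delta^{-1})$ private bits before the players run it. The only cosmetic difference is that you work with the error vectors $e_D$ rather than the protocols' output vectors $P_i \in \{0,1\}^N$ (the paper bounds $\norm{f - P'}_\infty$ directly), which changes nothing in the argument.
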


\begin{proof}
    We view $f \in \R^N$ for $N=2^{2n}$. By definition of $R^{pub}_{\epsilon}$, there exists a set of deterministic protocols, denoted by 
        $S = \{P_1, \ldots, P_q\}$,
    where $P_i$ is a Boolean function and $P_i \in \R^N$,  satisfying
    \begin{equation}    \label{eq:approx_1}
        \norm{f-P}_\infty \le \epsilon
    \end{equation}
    for some $P \in \R^N$ where $P$ is a probabilistic average of $P_i$'s, hence, $P \in \convex(S)$. By Proposition \ref{prop:Cara}, there exists a subset 
        $S' \subseteq S$
    of size $|S'| \le N+1$ such that
        $P \in \convex(S')$.
    Because every $P_i$ is a Boolean-valued function, we have
        $\max_{y \in S'} \norm{y}_\infty \le 1$.
    By Proposition \ref{prop:Cara_approx}, there exists another point
        $P' \in \convex(S')$
    that can be written as a convex combination of at most $k = O(\frac{\log |S'|}{\delta})$ points from $S'$, and 
    \begin{equation} \label{eq:approx_2}
        \norm{P-P'}_\infty \le \delta.
    \end{equation}
    WLOG, let 
        $P' = \sum_{i=1}^k \mu_i P_i$,
    where
        $\sum_i \mu_i = 1$
        and
        $\mu_i \in [0,1]$.
    Now, Alice and Bob compute $f$ using private randomness, as follows: Alice privately samples a random number
        $i$ from the set $[k]$
    according to the distribution
        $(\mu_1, \ldots, \mu_k)$,
    she sends $i$ to Bob using at most
        $\log k + 1 = O(\log\log |S'| + \log \delta^{-1}) = O(\log n + \log \delta^{-1})$
    bits. Then, they run the deterministic protocol $P_i$.  By \eqref{eq:approx_1} and \eqref{eq:approx_2},
        $\norm{f-P'}_\infty \le \epsilon + \delta$.
    This gives a private randomness protocol with the desired error and cost.
\end{proof}

\bibliography{mybib}{}
\bibliographystyle{plain}

\end{document}